\newtheorem{theorem}{Theorem}
\newtheorem{lemma}[theorem]{Lemma}
\newtheorem{corollary}[theorem]{Corollary}
\begin{document}

\title{Towards Efficient Compressive Data Collection \\in the Internet of Things}

\author{\IEEEauthorblockN{Peng Sun$^{\dagger}$, Liantao Wu$^{\wr}$, and Zhi Wang$^{\S}$}
\IEEEauthorblockA{$^{\dagger}$School of Science and Engineering, The Chinese University of Hong Kong, Shenzhen, P. R. China}

\IEEEauthorblockA{$^{\wr}$Shanghai Institute of Fog Computing Technology (SHIFT), ShanghaiTech University, P. R. China}
\IEEEauthorblockA{$^{\S}$State Key Laboratory of Industrial Control Technology, Zhejiang University, P. R. China}


Email: sunpeng@cuhk.edu.cn, wult@shanghaitech.edu.cn, zjuwangzhi@zju.edu.cn}

\markboth{Journal of \LaTeX\ Class Files,~Vol.~14, No.~8, August~2015}%
{Shell \MakeLowercase{\textit{et al.}}: Bare Demo of IEEEtran.cls for IEEE Journals}

\maketitle
\begin{abstract}
It is of paramount importance to achieve efficient data collection in the Internet of Things (IoT). Due to the inherent structural properties (e.g., sparsity) existing in many signals of interest, compressive sensing (CS) technology has been extensively used for data collection in IoT to improve both accuracy and energy efficiency. Apart from the existing works which leverage CS as a channel coding scheme to deal with data loss during transmission, some recent results have started to employ CS as a source coding strategy. The frequently used projection matrices in these CS-based source coding schemes include dense random matrices (e.g., Gaussian matrices or Bernoulli matrices) and structured matrices (e.g., Toeplitz matrices). However, these matrices are either difficult to be implemented on resource-constrained IoT sensor nodes or have limited applicability. To address these issues, in this paper, we design a novel simple and efficient projection matrix, named sparse Gaussian matrix, which is easy and resource-saving to be implemented in practical IoT applications. We conduct both theoretical analysis and experimental evaluation of the designed sparse Gaussian matrix. The results demonstrate that employing the designed projection matrix to perform CS-based source coding could significantly save time and memory cost while ensuring satisfactory signal recovery performance.

\end{abstract}

\begin{IEEEkeywords}
Internet of Things, data collection, compressive sensing, sparse Gaussian matrix
\end{IEEEkeywords}

\IEEEpeerreviewmaketitle

\section{Introduction}

\IEEEPARstart{W}{ith} the prevalence of smart devices, sensors, RFID tags, etc., and the broad deployment of wireless communication infrastructures, the Internet of Things (IoT) is undergoing an unprecedentedly rapid development~\cite{qiu2018can,wu2020toward}. Recent years have witnessed a wide range of IoT applications, such as environmental monitoring, intelligent transportation systems, smart cities, and smart homes~\cite{qian2019internet}, to name a few.  

Data collection is the fundamental operation in IoT to underpin various applications. The IoT sensor nodes (e.g., smart devices, sensors) need to periodically report their sensed data to a fusion center (FC), where information aggregation and extraction tasks are performed. Generally, IoT sensor nodes are resource-constrained. More precisely, they have limited computational capability, storage capacity, and power supply. Therefore, it is of crucial importance to achieve efficient data collection in IoT. Many real-world signals of interest (e.g., temperature, noise level, air quality level, and traffic speed) present inherent structural properties (e.g., sparsity). Thus, compressive sensing (CS)~\cite{baraniuk2007compressive} has been widely used to achieve efficient information acquisition and transmission in IoT, as it promises accurate recovery of high-dimensional sparse signals from only a small number of random measurements.

Thus far, a significant amount of research efforts have been devoted to utilizing CS for compressive data collection in IoT~\cite{sun2018prss,wu2015efficient,wu2017sparse,kong2013data}. Most of the existing works leverage CS as a channel coding scheme to deal with data loss during information collection in IoT. For instance, the work in~\cite{kong2013data} mined and analyzed four different data loss patterns and proposed an environmental space-time improved compressive sensing (ESTICS) algorithm to reconstruct the massive missing data accurately. Our previous work~\cite{wu2015efficient} modeled the random data loss during transmission over a lossy wireless link as a CS measurement process, where the correctly received data samples determine the corresponding projection matrix. Some recent works (e.g.,~\cite{li2017unbalanced}) have started to employ CS as a source coding strategy to further improve the accuracy and energy efficiency of information collection. A simple and straightforward CS-based source coding scheme is to perform random projection on the original sensed signals at the IoT sensor nodes before transmission, where some proper projection matrices are utilized.

In CS-based source coding schemes, the projection matrix is a critical factor for achieving satisfactory signal reconstruction performance. Specifically, the CS theory establishes that the projection matrix needs to be incoherent with the transform basis, where the original signal has a sparse representation, to guarantee accurate signal recovery. Frequently used projection matrices in CS can be divided into three categories. The first one is standard dense random matrices, e.g., random Gaussian or Bernoulli matrices. Elements in these matrices are independently and identically distributed random variables with certain distribution functions, i.e., Gaussian distribution or Bernoulli distribution. Existing works have demonstrated that Dense random matrices are incoherent with most sparsifying bases and satisfy the restricted isometry property (RIP), which promises accurate signal recovery with high probability. However, as stated in~\cite{mamaghanian2011compressed}, the implementation of dense random Gaussian/Bernoulli projection matrices requires the integration of a Gaussian/Bernoulli-distributed random number generator and the computation of complicated matrix multiplication. This is too complex and resource-consuming for low-cost IoT sensor nodes with limited computational capability and storage capacity. The second kind of projection matrices is structured projection matrices, e.g., Toeplitz matrices. Elements of this kind of matrices are arranged with a particular pattern, yielding a lower degree of randomness than dense random matrices. The third kind of projection matrices is partial basis matrices, such as partial Fourier matrices, constructed by selecting a subset of rows from an orthogonal Fourier transform basis. The essential requirement of the projection matrix used in CS-based source coding is that given a certain number of data samples received at the FC, we could consistently achieve accurate signal recovery regardless of the transform basis under which the target signal can be sparsely represented. However, as stated in~\cite{zhang2010compressed}, partial Fourier matrices can only adapt to the time domain sparse signals, while as indicated in~\cite{bajwa2007toeplitz}, Toeplitz matrices are neither applicable to all sparse signals. Therefore, the shortcoming of structured projection matrices and partial basis matrices lies in that they are confined to specific sparse signals, leading to a limited range of applications. To summarize, none of the existing projection matrices can satisfy our requirement of efficient and wide-area applications of CS techniques in IoT data collection.

To address the issues mentioned above, in this paper, we design a simple and efficient projection matrix, named sparse Gaussian matrix, which is easy and efficient to be implemented on low-cost IoT sensor nodes and applies to various kinds of sparse signals. Specifically, we first present the construction of a sparse Gaussian matrix and then theoretically prove its RIP, which guarantees accurate signal recovery with high probability. Finally, we conduct extensive experiments to evaluate the performance of the designed sparse Gaussian matrix. The results demonstrate that employing the designed sparse Gaussian matrix to perform CS-based source coding can significantly save time and memory cost for IoT sensor nodes while guaranteeing satisfactory signal recovery performance at the FC compared to existing projection matrices.

In summary, this paper makes the following contributions: 


\begin{itemize}

\item To the best of our knowledge, this is the first work to consider designing a proper projection matrix to realize efficient and resource-saving CS-based source coding for efficient data collection in IoT.  
\item We design a novel, simple projection matrix, named sparse Gaussian matrix, and theoretically prove its RIP. 
\item We conduct extensive experiments and the results demonstrate that leveraging the designed sparse Gaussian matrix to perform CS-based source coding can significantly reduce the time and memory cost for the IoT sensor nodes while guaranteeing satisfactory signal recovery performance at the FC. 
\end{itemize}

\section{System Model and Problem Formulation}

\label{sec5}
In this section, we present the system model and problem formulation.

\begin{figure}
\centering
\includegraphics[width=0.8\columnwidth]{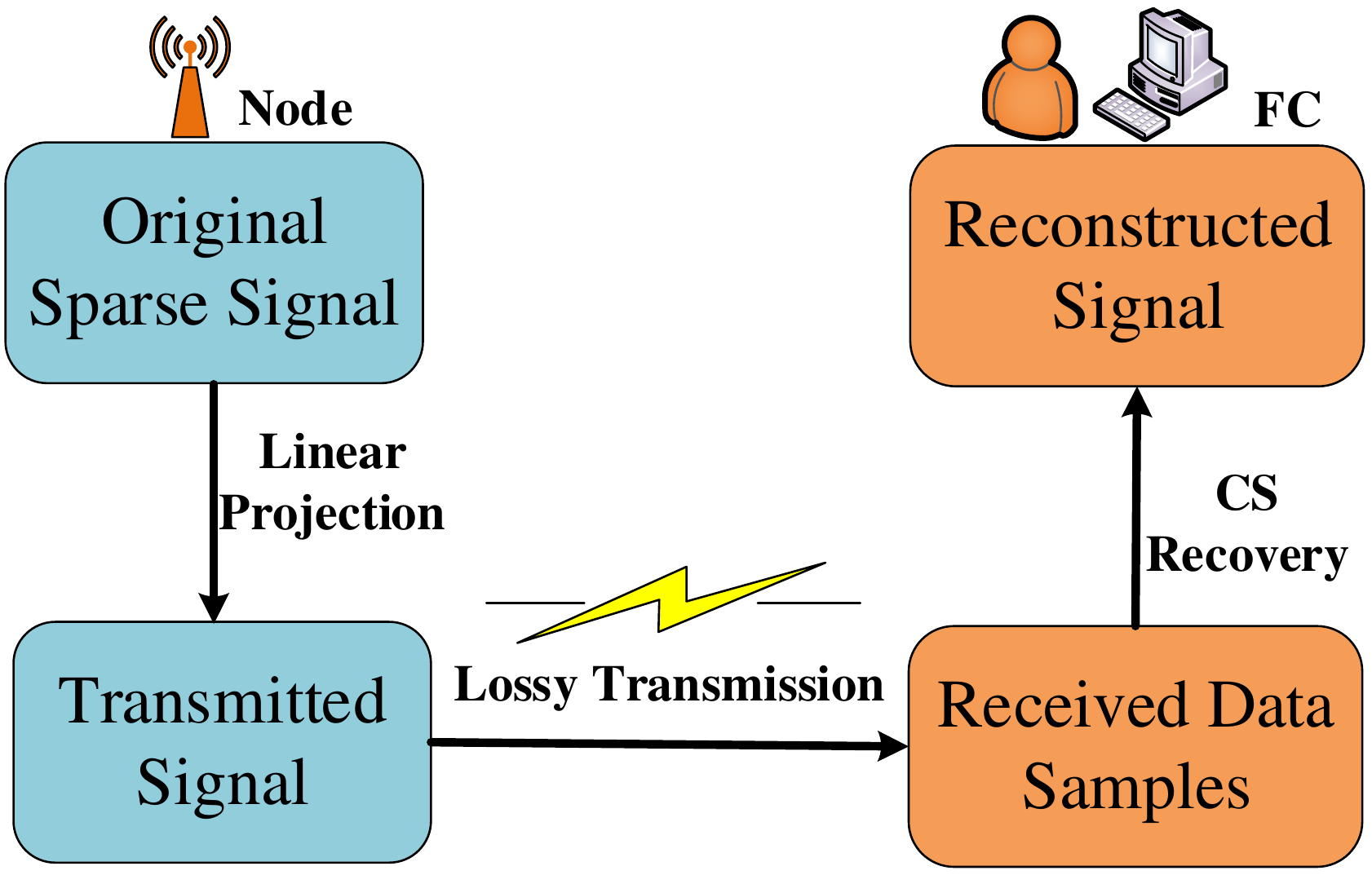}
\caption{System model. }
\label{fig:csLink}
\end{figure}

\subsection{System Model}
This paper considers a typical data collection process in IoT, which is fundamental to enable various applications. Fig.~\ref{fig:csLink} shows the whole process of data collection. Specifically, the IoT sensor node periodically senses the physical world. In a specific sensing period, the node will collect a certain number of data samples and generate an original sensed signal with a sparse representation under some transform basis (e.g., Fourier basis). The data samples are then transmitted to the FC for further information aggregation and extraction. Due to channel noise, congestion, packet collisions, and limited transmission power, data transmission in IoT is prone to data loss. That is, only a subset of transmitted data samples can successfully arrive at the FC. To deal with the data loss, based on the sparse structure of the signal, our previous work~\cite{wu2015efficient} proposed to model the random data loss during transmission as a CS measurement process. The corresponding projection matrix modeling data loss is a sparse binary measurement matrix which contains one and only one ``$1$" element in each row and at most one ``$1$" in each column, and ``$0$" everywhere else (Note that the formal definition of this projection matrix will be presented in the next subsection). It has been shown in~\cite{wu2015efficient} that the CS-based sparse signal transmission method achieves much better transmission performance compared to traditional retransmission-interpolation schemes.

Moreover, as pointed out in our previous work~\cite{wu2017sparse}, for a large number of applications in IoT, the transmission rate is much less than the sampling rate. For example, the upper bound of the transmission rate in IEEE 802.15.4 is 250kbps, while the sampling rate of commercial AD converters is several MHz. Moreover, when multiple sensor nodes transmit data to the fusion center (FC) via random access, where packet collisions may happen, a higher transmission rate generally leads to more severe packet collisions, worsening the signal recovery performance at the FC. Thus, we introduce a simple CS-based source coding operation at the sensor node before transmission. Specifically, instead of directly transmitting the original sensed signal, which comprises all the acquired data samples, it will be projected onto a lower-dimensional space before being sent. In this way, the number of actually transmitted data samples is smaller than the number of acquired data samples, thus avoiding unnecessary resource costs for IoT sensor nodes. To optimize the resource cost of the source coding process for the IoT sensor nodes and ensure satisfactory signal recovery performance at the FC, the projection matrix used in CS-based source coding should be carefully designed. 


\subsection{Problem Formulation}
In a specific sensing period, an IoT sensor node is assumed to have acquired $N$ data samples from the physical world. Then, we denote the original signal assembled by the $N$ data samples by $\bf{x}$ and ${\bf{x}} = [x_1,x_2,...,x_N]^T$, where $x_i$ ($i = 1,2,...,N$) is the $i$-th data sample. It has been shown in~\cite{wu2017sparse,sun2018prss,sun2019scra} that most target signals in IoT applications have a sparse (compressible) representation in some transform basis, e.g., Fourier, DCT, wavelets, etc. Formally, if we denote by ${\bf{\Psi }}$ the transform basis under which $\bf{x}$ can be sparsely represented, we have ${\bf{x}} = {\bf{\Psi s}}$, and ${\bf{s}}$ is a sparse vector, which contains only a small number of nonzero elements.

As stated above, instead of directly transmitting the original signal $\bf{x}$, it is more desirable to perform a CS-based source coding before transmission, i.e., conducting a random projection on $\bf{x}$ using a well-designed projection matrix ${{\bf{\Phi}}_s} \in {\mathbb{R}}^{{M_s} \times N}$. Formally, this random projection process can be represented as
\begin{equation}
\label{sam}
{{\bf{x}}_s} = {{\bf{\Phi}}_s}{\bf{x}}, 
\end{equation}
where ${{\bf{x}}_s} \in {\mathbb{R}}^{M_s}$ is the signal after the projection operation, and it is the signal actually to be transmitted. ${{\bf{\Phi}}_s}$ is the employed projection matrix.

As previously mentioned, data transmission from the IoT sensor node to the FC is subject to data loss. In the same way as in~\cite{wu2015efficient}, we model the random data loss during transmission as a CS measurement process, which can be represented as
\begin{equation}
\label{sam2}
{\bf{y}} = {{\bf{\Phi}}_r}{{\bf{x}}_s} = {{\bf{\Phi}}_r}{{\bf{\Phi}}_s}{\bf{x}} = {\bf{\Phi}}{\bf{x}} = {\bf{\Phi}}{\bf{\Psi}}{\bf{s}} = {\bf{A}}{\bf{s}},
\end{equation}
where ${{\bf{\Phi}}_r} \in {\mathbb{R}}^{M \times {M_s}}$ is the projection matrix modeling the random data loss, and it is determined by the correctly received data samples. ${\bf{\Phi}} = {{\bf{\Phi}}_r}{{\bf{\Phi}}_s}$ is the equivalent projection matrix, ${\bf{A}} = {\bf{\Phi}}{\bf{\Psi}} = {{\bf{\Phi}}_r}{{\bf{\Phi}}_s}{\bf{\Psi}}$ is called the equivalent sensing matrix, and $\bf{y} \in {\mathbb{R}}^M$ is the measurement vector composed of the successfully received data samples at the FC.

Note that ${{\bf{\Phi}}_r}$ is a sparse binary measurement matrix which contains one and only one ``$1$" element in each row and at most one ``$1$" in each column, and ``$0$" everywhere else. Mathematically, ${{\bf{\Phi}}_r}$ is constructed as
\begin{equation}
{\bf{\Phi }}_r(i,j) = \left\{ {\begin{array}{*{20}{c}}
{1\;\;\;\;\;\;\;\;\;\;\;j{\rm{ = J}}(i)}\\
{0\;\;\;\;\;\;\;\;\text{otherwise}}
\end{array}} \right.,
\label{Phi}
\end{equation}
where $i$ is the row index of ${\bf{\Phi}}_r$, corresponding to the sequence number of the received data samples, and $j$ is the column index. $J(i)$ is the sequence number of the data sample sent by the IoT sensor node while being the $i$-th received data sample at the FC. For example, assume that $7$ data samples are sent from the node, and only $4$ of them are successfully received at the FC. If their corresponding sending sequence number is $1$, $2$, $5$, $7$, then ${\bf{\Phi}}_r$ is constructed as 
\begin{equation}
{{\bf{\Phi}}_r} = \left[ {\begin{array}{*{20}{c}}
1&0&0&{\begin{array}{*{20}{c}}
0&0&0&0
\end{array}}\\
0&1&0&{\begin{array}{*{20}{c}}
0&0&0&0
\end{array}}\\
0&0&0&{\begin{array}{*{20}{c}}
0&1&0&0
\end{array}}\\
0&0&0&{\begin{array}{*{20}{c}}
0&0&0&1
\end{array}}
\end{array}} \right].
\label{randommatrix}
\end{equation}

Finally, at the FC, the original signal $\bf{x}$ can be exactly recovered from the measurement vector $\bf{y}$ which is composed of a sufficient number of successfully received data samples via ${\hat{\bf{x}}} = {\bf{\Psi}}{\hat{\bf{s}}}$, where ${\hat{\bf{s}}}$ is the solution to the following constrained optimization problem
\begin{equation}
{\hat{\bf{s}}} = \arg\min\limits_{\bf{s}}\|{\bf{s}}\|_0,~~\text{s.t.}~{\bf{y}} = {\bf{A}}{\bf{s}},
\label{rec1}
\end{equation}
which can be solved using the orthogonal matching pursuit (OMP) method~\cite{tropp2004greed}. Furthermore, ${\hat{\bf{s}}}$ can also be found from the solution to the following convex optimization problem
\begin{equation}
{\hat{\bf{s}}} = \arg\min\limits_{\bf{s}}\|{\bf{s}}\|_1,~~\text{s.t.}~{\bf{y}} = {\bf{A}}{\bf{s}},
\label{rec2}
\end{equation}
which can be solved efficiently using the basis pursuit (BP) algorithm~\cite{candes2008introduction}. 

Herein, the signal recovery performance can be reflected by signal recovery error, which is calculated as $\frac{{\left\| {{\bf{x}} - {\bf{\hat x}}} \right\|_2}}{{\left\| {\bf{x}} \right\|_2}}$. Obviously, a smaller signal recovery error indicates better signal recovery performance.

Notably, according to CS theory, the equivalent projection matrix ${\bf{\Phi}} = {{\bf{\Phi}}_r}{{\bf{\Phi}}_s}$ plays a critical role in achieving accurate signal recovery. However, the projection matrix ${{\bf{\Phi}}_r}$ which models data loss during transmission is determined by the pragmatic link quality. Specifically, ${{\bf{\Phi}}_r}$ is completely determined by which transmitted data samples are correctly received at the FC. Thus, we cannot design ${{\bf{\Phi}}_r}$. In contrast, we can carefully choose the projection matrix ${{\bf{\Phi}}_s}$ used in CS-based source coding at the IoT sensor node. Traditional projection matrices, including dense random matrices, structured projection matrices, and partial basis matrices, are either difficult to be implemented on low-cost IoT sensor nodes or have limited applicability, which motivates our design of sparse Gaussian matrix in the next section. 




\section{Sparse Gaussian Matrix Design}
\label{sec3}
In this section, we present the construction method of the sparse Gaussian matrix, and theoretically prove its RIP.


\subsection{Sparse Gaussian Matrix Construction}
Given a series of numbers $\{ {x_i}\} _{i = 1}^n$ which are independently and identically distributed according to a zero-mean Gaussian distribution, sparse Gaussian matrix is constructed by randomly placing these elements in a null matrix (Note that the size of the null matrix is larger than the number of used elements), i.e.,
\begin{equation}
{\bf{\Phi}}_{sG} = \left[ {\begin{array}{*{20}{c}}
{{x_1}}&0&{..}&{{x_2}}&0\\
0&{..}&{{x_i}}&0&{..}\\
0&{{x_{i + 1}}}&0&{..}&{{x_{i + 2}}}\\
0&{..}&0&{{x_n}}&0
\end{array}} \right],
\end{equation}
or we can construct it in a structured manner as an analogue to Toeplitz matrix, i.e., 
\begin{equation}
{\bf{\Phi}}_{sG} = \left[ {\begin{array}{*{20}{c}}
{{x_1}}&0&{..}&{{x_2}}&0\\
0&{{x_i}}&0&{..}&{{x_{i + 1}}}\\
{{x_{n - 2}}}&0&{{x_{n - 3}}}&{..}&0\\
0&{{x_n}}&{..}&{{x_{n - 1}}}&0
\end{array}} \right].
\end{equation}

When implementing a sparse Gaussian matrix in an IoT sensor node, the first step is to generate a null matrix according to the predefined matrix size. Then, the locations of a small number of nonzero elements are determined according to a random number generator. Finally, the values of the nonzero elements are determined according to a Gaussian-distributed (zero mean and unit variance) random number generator, while the rest locations in the null matrix are set to $0$. 


\subsection{RIP Proof for Sparse Gaussian Matrix}
A Matrix ${\bf{\Phi }}$ is said to satisfy RIP of order $K$ with parameter ${\delta _K} \in (0,1)$, if
\begin{equation}
(1 - {\delta _K})\|{\bf{\alpha }}\|_2^2 \le \|{\bf{\Phi \alpha }}\|_2^2 \le (1 + {\delta _K})\|{\bf{\alpha }}\|_2^2
\end{equation}
holds for all ${\bf{\alpha}} \in {\mathbb{R}}^N$ with no more than $K$ nonzero entries. In other words, $\bf{\Phi}$ satisfies RIP ($K,{\delta _K}$). 

In practical application scenarios, however, the signals of interest (i.e., $\bf{x}$) are usually sparse under an orthogonal transform basis $\bf{\Psi}$ rather than the canonical basis. Therefore, it's more convenient to consider the $\bf{\Psi}$-RIP as follows
\begin{equation}
(1 - {\delta _K})\|{\bf{x}}\|_2^2 \le \|{\bf{\Phi x}}\|_2^2 \le (1 + {\delta _K})\|{\bf{x}}\|_2^2,
\end{equation}
for all $\bf{x}$ satisfying $\|{{{\bf{\Psi}}^H}{\bf{x}}}\|_0 \le K$, where ${\bf{\Psi}}^H$ denotes the conjugate transpose of $\bf{\Psi}$. 


According to the definition of RIP, as long as the eigenvalues of all submatrices' gram matrix are in the range of $( {1 - {\delta _K}} , {1 + {\delta _K}} )$, where submatrices are formed by retaining no more than $K$ columns of ${\bf{\Phi }}$, then the matrix ${\bf{\Phi }}$ satisfies the RIP condition. 

\begin{lemma}\label{lem:Gersgorin}
The eigenvalues of a matrix ${\bf{\Phi }} \in {\mathbb{R}}^{m \times m}$ all lie in the union of $m$ discs ${d_i} = {d_i}({c_i},{r_i})$, $i = 1,2,...,m$, centered at ${c_i} = {{\bf{\Phi }}_{ii}}$ with radius
\begin{equation}
{r_i} = \sum\limits_{j = 1,j \ne i}^m {|{{\Phi} _{ij}}|},
\end{equation}
\end{lemma}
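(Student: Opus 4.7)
The plan is to prove this classical Gershgorin disc theorem by the standard dominant-coordinate argument. I would start by letting $\lambda$ be an arbitrary eigenvalue of ${\bf{\Phi}}$ with associated nonzero eigenvector ${\bf{v}} = (v_1, v_2, \ldots, v_m)^T$ (allowed to be complex-valued, since even a real matrix may have complex eigenpairs, which is why the conclusion is phrased in terms of discs in the complex plane). Then I would fix the index $i$ for which $|v_i| = \max_{1 \le k \le m} |v_k|$. Because ${\bf{v}} \neq 0$, this maximum satisfies $|v_i| > 0$, which will later justify dividing by $|v_i|$.

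Next I would write out the $i$-th component of the eigenvalue equation ${\bf{\Phi v}} = \lambda {\bf{v}}$, namely $\sum_{j=1}^m {\bf{\Phi}}_{ij} v_j = \lambda v_i$, and move the diagonal term to the right to obtain $(\lambda - {\bf{\Phi}}_{ii}) v_i = \sum_{j \ne i} {\bf{\Phi}}_{ij} v_j$. Taking absolute values, applying the triangle inequality, and then using the maximality bound $|v_j| \le |v_i|$ yields
\[
|\lambda - {\bf{\Phi}}_{ii}|\,|v_i| \;\le\; \sum_{j \ne i} |{\bf{\Phi}}_{ij}|\,|v_j| \;\le\; |v_i| \sum_{j \ne i} |{\bf{\Phi}}_{ij}| \;=\; |v_i|\, r_i.
\]
Dividing through by $|v_i| > 0$ gives $|\lambda - {\bf{\Phi}}_{ii}| \le r_i$, which is exactly the statement that $\lambda \in d_i(c_i, r_i)$ with $c_i = {\bf{\Phi}}_{ii}$.

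Since the eigenvalue $\lambda$ was arbitrary, every eigenvalue of ${\bf{\Phi}}$ lies in at least one of the discs $d_i$, and therefore in the union $\bigcup_{i=1}^m d_i$, completing the proof. The only real subtlety, and what I expect to be the sole ``obstacle,'' is recognizing that the index $i$ must be chosen to maximize $|v_i|$: an arbitrary row index would not permit the bound $|v_j|/|v_i| \le 1$ needed to close the triangle inequality. No other machinery is required, and the result follows in a few lines once the dominant coordinate is identified.
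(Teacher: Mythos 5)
Your proof is correct and is the classical dominant-coordinate argument for the Ger\v{s}gorin disc theorem; the paper itself gives no proof of Lemma~\ref{lem:Gersgorin}, deferring instead to the cited reference~\cite{varga2010gervsgorin}, which presents essentially this same standard argument. You correctly identify the one genuine subtlety, namely that the row index must be the one maximizing $|v_i|$ so that the triangle-inequality bound closes.
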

where ${{\Phi }_{ii}}$ and ${{\Phi }_{ij}}$ denote the diagonal and off-diagonal elements of $\bf{\Phi}$, respectively. For the proof of Lemma~\ref{lem:Gersgorin}, please refer to~\cite{varga2010gervsgorin}. 

To prove that the eigenvalues of all submatrices' gram matrix are in the range of $( {1 - {\delta _K}} , {1 + {\delta _K}} )$, we resort to the following Lemma. 
\begin{lemma}\label{lem2}
Suppose ${\delta _d},~{\delta _o} > 0$ and ${\delta _d} + {\delta _o} = {\delta _K} \in (0,1)$, and if ${G_{ii}}$, the diagonal elements of matrix ${\bf{G}} = {\bf{\Phi }}^T{\bf{\Phi }}$ satisfy $|{G_{ii}} - 1| < {\delta _d}$ and the off-diagonal elements ${G_{ij}~(i \ne j)}$ satisfy $|{G_{ij}}| < {\delta _o}/K$, then ${\bf{\Phi }}$ satisfies RIP ($K,{\delta _K}$). 
\end{lemma}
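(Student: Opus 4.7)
The plan is to reduce the RIP inequality to a statement about the extreme eigenvalues of the principal $K \times K$ submatrices of $\mathbf{G}=\mathbf{\Phi}^T\mathbf{\Phi}$, and then control those eigenvalues via Ger\v{s}gorin's disc theorem (Lemma~\ref{lem:Gersgorin}). Concretely, fix any $\boldsymbol{\alpha}\in\mathbb{R}^N$ with support $T$ of size at most $K$, and let $\mathbf{\Phi}_T$ denote the column submatrix of $\mathbf{\Phi}$ indexed by $T$. Then
\begin{equation}
\|\mathbf{\Phi}\boldsymbol{\alpha}\|_2^2 \;=\; \boldsymbol{\alpha}_T^{T}\,\mathbf{G}_T\,\boldsymbol{\alpha}_T,\qquad \mathbf{G}_T:=\mathbf{\Phi}_T^{T}\mathbf{\Phi}_T,
\end{equation}
so by the Rayleigh--Ritz characterization the RIP$(K,\delta_K)$ condition is equivalent to the assertion that every eigenvalue $\lambda$ of $\mathbf{G}_T$ satisfies $|\lambda-1|<\delta_K$, for every index set $T$ of cardinality at most $K$.

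Next, the plan is to apply Lemma~\ref{lem:Gersgorin} to each such $\mathbf{G}_T$. By hypothesis the diagonal entries obey $|(\mathbf{G}_T)_{ii}-1|<\delta_d$, so each Ger\v{s}gorin disc is centered in $(1-\delta_d,\,1+\delta_d)$. The radius of the $i$-th disc is
\begin{equation}
r_i \;=\; \sum_{j\in T,\, j\ne i} |(\mathbf{G}_T)_{ij}| \;<\; (K-1)\cdot \frac{\delta_o}{K} \;<\; \delta_o,
\end{equation}
using the off-diagonal bound $|G_{ij}|<\delta_o/K$ and the fact that $|T|\le K$. Therefore every eigenvalue $\lambda$ of $\mathbf{G}_T$ lies in some disc whose center is within $\delta_d$ of $1$ and whose radius is strictly less than $\delta_o$, giving $|\lambda-1|<\delta_d+\delta_o=\delta_K$ by the triangle inequality.

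Combining these two steps, for any $K$-sparse $\boldsymbol{\alpha}$,
\begin{equation}
(1-\delta_K)\|\boldsymbol{\alpha}\|_2^2 \;\le\; \lambda_{\min}(\mathbf{G}_T)\|\boldsymbol{\alpha}_T\|_2^2 \;\le\; \|\mathbf{\Phi}\boldsymbol{\alpha}\|_2^2 \;\le\; \lambda_{\max}(\mathbf{G}_T)\|\boldsymbol{\alpha}_T\|_2^2 \;\le\; (1+\delta_K)\|\boldsymbol{\alpha}\|_2^2,
\end{equation}
since $\|\boldsymbol{\alpha}_T\|_2=\|\boldsymbol{\alpha}\|_2$. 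This is exactly RIP$(K,\delta_K)$. The argument is essentially mechanical once one sets it up this way; the only subtlety worth flagging is making sure the universal quantification is handled correctly, i.e.\ that the Ger\v{s}gorin bound is applied to \emph{every} $K$-column principal submatrix of $\mathbf{G}$ rather than to $\mathbf{G}$ itself, and that the radius estimate uses $K-1$ columns while the hypothesis normalizes by $K$, which is what makes the splitting $\delta_K=\delta_d+\delta_o$ close cleanly.
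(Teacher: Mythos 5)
Your proof is correct and follows exactly the route the paper sets up: restrict to the Gram matrix $\mathbf{G}_T$ of each $K$-column submatrix, apply Ger\v{s}gorin's theorem (Lemma~\ref{lem:Gersgorin}) to place every eigenvalue within $\delta_d+\delta_o=\delta_K$ of $1$, and conclude via the Rayleigh quotient. The radius bound $(K-1)\delta_o/K<\delta_o$ and the reduction to principal submatrices are handled correctly, so nothing is missing.
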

\begin{proof}
Please see Appendix A in the online appendix~\cite{Online_appendix}. 
\end{proof}

If the elements of $\bf{\Phi}$ are independently and identically distributed according to a Gaussian distribution, then ${G_{ii}}$ is the sum of squares of Gaussian random variables, and ${G_{ij}~(i \ne j)}$ is the sum of products between independent Gaussian random variables. The range of ${G_{ii}}$ and ${G_{ij}~(i \ne j)}$ can be determined by the following Lemma. 
\begin{lemma}\label{lem3}
Given $\{ {x_i}\} _{i = 1}^n$, assume that it has $kn$ ($0<k<1$) nonzero elements obeying independent and identical Gaussian distribution with zero mean and variance $1/kn$, then the sum of squares of ${x_i}$'s satisfies 
\begin{equation}
\Pr \left( {|\sum\limits_{i = 1}^n {x_i^2 - 1} | \ge {\delta _d}} \right) \le 2\exp ( - \frac{{nk\delta _d^2}}{{16}}).
\end{equation}
\end{lemma}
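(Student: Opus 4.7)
The plan is to recognize the statement as a standard chi-squared concentration inequality in disguise and prove it via a Chernoff bound on the moment generating function of a chi-squared random variable. Since only $kn$ of the entries $x_i$ are nonzero and each nonzero entry has variance $1/(kn)$, I would introduce the normalized variables $y_j = \sqrt{kn}\,x_{i_j}$ ranging over the nonzero indices, so that each $y_j \sim \mathcal{N}(0,1)$ and therefore $Z := \sum_{i=1}^n x_i^2 = \frac{1}{kn}\sum_{j=1}^{kn} y_j^2$. The rescaled sum $knZ$ is a $\chi^2_{kn}$ random variable, so the claim becomes an upper-tail/lower-tail concentration for $\chi^2_{kn}/(kn)$ around $1$ with deviation $\delta_d$.

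Next I would handle the two tails separately using the standard Chernoff method with the chi-squared moment generating function $\mathbb{E}[e^{s\cdot knZ}] = (1-2s)^{-kn/2}$, valid for $s<1/2$. For the upper tail $\Pr(knZ \ge kn(1+\delta_d))$, I would optimize in $s$ by choosing $s^\star = \delta_d/(2(1+\delta_d))$, which yields a bound of the form $\exp\!\bigl(-\tfrac{kn}{2}(\delta_d - \ln(1+\delta_d))\bigr)$. For the lower tail, a symmetric calculation with $s < 0$ and optimizer $s^\star = -\delta_d/(2(1-\delta_d))$ gives $\exp\!\bigl(\tfrac{kn}{2}(\ln(1-\delta_d)+\delta_d)\bigr)$.

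To convert these into the target exponential in $\delta_d^2$, I would invoke the elementary inequalities $\delta_d - \ln(1+\delta_d) \ge \delta_d^2/4$ and $-\ln(1-\delta_d) - \delta_d \ge \delta_d^2/2$, both valid for $0 \le \delta_d \le 1$; the first is verified by checking that the function $g(t) = t - \ln(1+t) - t^2/4$ has $g(0)=0$ and $g'(t) = t(1-t)/(2(1+t)) \ge 0$ on $[0,1]$, and the second follows immediately from the Taylor expansion of $\ln(1-\delta_d)$. These inequalities yield upper-tail bound $\exp(-kn\delta_d^2/8)$ and lower-tail bound $\exp(-kn\delta_d^2/4)$. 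Summing the two tails gives at most $2\exp(-kn\delta_d^2/8)$, which is stronger than the stated $2\exp(-kn\delta_d^2/16)$ and so suffices.

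The proof is essentially mechanical once the chi-squared structure is identified; the only delicate step is the elementary bound on $t - \ln(1+t)$, since the naive Taylor inequality goes the wrong direction and one must use the monotonicity argument above. The constant $16$ in the statement (rather than the tighter $8$) appears to provide a safety margin, so minor slackness in any of the intermediate estimates is harmless.
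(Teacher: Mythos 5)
Your argument is correct: reducing the sum to a $\chi^2_{kn}$ variable, applying the Chernoff bound with the optimizers $s^\star=\delta_d/(2(1+\delta_d))$ and $s^\star=-\delta_d/(2(1-\delta_d))$, and invoking the elementary inequalities $\delta_d-\ln(1+\delta_d)\ge\delta_d^2/4$ and $-\ln(1-\delta_d)-\delta_d\ge\delta_d^2/2$ yields $2\exp(-kn\delta_d^2/8)$, which is stronger than the stated $2\exp(-kn\delta_d^2/16)$; this is the standard moment-generating-function route by which such chi-squared concentration bounds are obtained, and it is essentially the approach the paper relies on. The only point worth flagging is that your bound on $\delta_d-\ln(1+\delta_d)$ requires $\delta_d\le 1$, but this is harmless because in the paper's usage $\delta_d<\delta_K<1$ (and for $\delta_d\ge 1$ the lower tail is vacuous since the sum of squares is nonnegative).
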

\begin{proof}
Please see Appendix B in the online appendix~\cite{Online_appendix}.  
\end{proof}

\begin{lemma}\label{lem4}
Given two sets of numbers $\{ {x_i}\} _{i = 1}^n$ and $\{ {y_i}\} _{i = 1}^n$, each with $kn$ nonzero elements, which are independently and identically distributed according to a Gaussian distribution with zero mean and variance $1/kn$, then
\begin{equation}
\Pr \left( {|\sum\limits_{i = 1}^n {{x_i}{y_i}} | \ge t} \right) \le 2\exp ( - \frac{{n{t^2}}}{{4 + 2t}}),
\end{equation}
where $0 \le t \le 1$.
\end{lemma}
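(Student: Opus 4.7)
The plan is to apply the standard Chernoff (exponential Markov) method, leveraging the moment generating function (MGF) of a product of two independent Gaussians. For $s>0$, I would first write
\[
\Pr\!\left(\sum_{i=1}^n x_i y_i \ge t\right) \le e^{-st}\,E\!\left[\exp\!\left(s\sum_{i=1}^n x_iy_i\right)\right] = e^{-st}\prod_{i=1}^n E[e^{s x_i y_i}],
\]
where the factorization uses independence of the pairs $(x_i,y_i)$ across $i$. The symmetric estimate applied to $-\sum x_iy_i$, combined with a union bound, will deliver the leading factor of $2$ in the final probability.

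Next, I would bound each factor $E[e^{s x_i y_i}]$. The term $x_iy_i$ vanishes unless both $x_i$ and $y_i$ are nonzero; in the latter case they are independent $\mathcal{N}(0,1/(kn))$ variables whose product has the well-known MGF $(1-s^2/(kn)^2)^{-1/2}$, valid for $|s|<kn$. Using the elementary estimate $-\tfrac{1}{2}\log(1-u)\le u/(2(1-u))$ on $u\in[0,1)$ with $u=s^2/(kn)^2$, each active factor contributes at most $\tfrac{s^2/(kn)^2}{2(1-s^2/(kn)^2)}$ to the cumulant generating function.

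Multiplying across the (at most $n$) contributing indices and substituting back into Chernoff yields an expression of the canonical sub-exponential shape $\exp(-st+\alpha s^2/(1-\beta s^2))$ for explicit constants $\alpha,\beta$ depending on $k$ and $n$. A standard Bernstein-style optimization over $s\in(0,1/\sqrt\beta)$ then converts this into a tail bound whose denominator is linear in $t$. The last step is to massage the outcome into the specific form $\exp(-nt^2/(4+2t))$ by a careful, deliberately loose choice of $s$ as a function of $t$ (for instance $s$ chosen so that the variance-like and scale-like terms in the Bernstein denominator combine in the desired $4+2t$ proportion).

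The main obstacle I anticipate is engineering the constants so that the denominator comes out to exactly $4+2t$, rather than some larger but qualitatively similar quantity. Notably, the claimed bound is $k$-free, which suggests that the proof deliberately trades away the potentially sharper $k$-dependence that Lemma~\ref{lem3} exhibits, in order to obtain a clean uniform statement suitable for invocation in the subsequent RIP argument. Care is also needed regarding how the supports of $\{x_i\}$ and $\{y_i\}$ overlap; since the final bound does not depend on $k$, the proof most likely invokes only the crude worst-case count of $n$ nonzero products instead of exploiting the typical intersection size.
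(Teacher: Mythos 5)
Your general method is sound---Chernoff with the product-of-Gaussians MGF $(1-s^2\sigma^4)^{-1/2}$ followed by a Bernstein-style optimization is exactly how the tail bound underlying this lemma is derived---but note that the paper itself does none of this: its entire proof is a citation of the inequality $\Pr(|\sum_i x_iy_i|\ge t)\le 2\exp\bigl(-t^2/(4\sigma^2(n\sigma^2+t/2))\bigr)$ from the Toeplitz-matrix paper of Haupt et al., followed by a parameter substitution. You are in effect re-proving the cited black box, which is legitimate and more self-contained.

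The genuine gap is in your final step. Carry the computation through with the stated variance $\sigma^2=1/(kn)$: each active factor contributes at most $\frac{s^2/(kn)^2}{2(1-s^2/(kn)^2)}$ to the cumulant generating function, and there are at most $kn$ indices at which both $x_i$ and $y_i$ are nonzero (not $n$; each support has size $kn$). Summing and optimizing $s$ yields an exponent of order $-knt^2/(4+2t)$, and your cruder count of $n$ active products makes this \emph{worse} (order $-k^2nt^2$), since more terms inflate the variance proxy. Because $k<1$, no choice of $s$ and no constant engineering will convert $knt^2$ into $nt^2$: the claimed $k$-free bound is \emph{stronger} than what the computation delivers, not weaker, so your closing remark that the lemma ``trades away the sharper $k$-dependence'' has the inequality backwards. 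In fact the statement as printed is false for small $k$: take $kn=1$ with coinciding supports, so that $\sum_i x_iy_i$ is a product of two standard normals; $\Pr(|XY|\ge 1)$ is a fixed positive constant, while the claimed bound $2\exp(-n/6)$ is arbitrarily small. The paper's substitution only closes because it silently uses $\sigma^2=1/n$ in place of the $1/(kn)$ announced in the hypothesis. The honest outputs of your argument are either $2\exp(-knt^2/(4+2t))$ under variance $1/(kn)$, or the stated bound under variance $1/n$; you should flag this discrepancy rather than try to force the constants to $4+2t$.
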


\begin{proof}
Assume $\{ {x_i}\} _{i = 1}^n$ and $\{ {y_i}\} _{i = 1}^n$ are independent and identical Gaussian random variables with zero mean and variance $1/n$, and based on~\cite{haupt2010toeplitz}, we have 
\begin{equation}\label{equ:Toeplitz2}
\Pr \left( {|\sum\limits_{i = 1}^n {{x_i}{y_i}} | \ge t} \right) \le 2\exp ( - \frac{{{t^2}}}{{4{\sigma ^2}(n{\sigma ^2} + t/2)}}).
\end{equation}
Substituting $n=kn$ and ${\sigma ^2} = 1/n$ into Eq.~(\ref{equ:Toeplitz2}), we have Lemma~\ref{lem4}. 
\end{proof}

\begin{theorem}
\label{RIPSG}
Assume that ${\bf{\Phi}} \in {\mathbb{R}}^{m \times n}(n > 2)$ is a sparse matrix, whose nonzero elements are independently and identically distributed according to a Gaussian distribution with zero mean and and variance $1/m$, then for any ${\delta _K} \in (0,1)$, there exist constants ${c_1}$ and ${c_2}$ which are only dependent on ${\delta _K}$ and $K$, such that as long as $m > {c_2}\ln n$, $\bf{\Phi}$ satisfies RIP($K,{\delta _K}$) with probability exceeding $1 - \exp ( - {c_1}m)$. 
\end{theorem}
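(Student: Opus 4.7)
The plan is to combine Lemma~\ref{lem2} with the concentration estimates in Lemma~\ref{lem3} and Lemma~\ref{lem4}, and then control all $K$-column submatrices simultaneously via a union bound. First I would fix an arbitrary index set $T \subset \{1,\ldots,n\}$ with $|T| = K$ and look at the Gram matrix $\mathbf{G}_T = \mathbf{\Phi}_T^T \mathbf{\Phi}_T$. Its diagonal entries are exactly the squared column norms $\|\phi_i\|_2^2$, and its off-diagonal entries are the inner products $\langle \phi_i, \phi_j\rangle$. Since each column of $\mathbf{\Phi}$ is a vector of length $m$ with a random Gaussian sparsity pattern of the form assumed in Lemmas~\ref{lem3}--\ref{lem4} (variance $1/m$), I would apply Lemma~\ref{lem3} with parameter $\delta_d$ to each of the $K$ diagonal entries and Lemma~\ref{lem4} with $t = \delta_o/K$ to each of the $\binom{K}{2}$ off-diagonal entries.

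Taking a union bound over the $K + \binom{K}{2}$ entries within the fixed submatrix $\mathbf{G}_T$ gives
\begin{equation}
\Pr\bigl[\mathbf{G}_T \text{ violates Lemma~\ref{lem2}}\bigr]
\;\le\; 2K\exp\!\Bigl(-\tfrac{m\delta_d^{2}}{16}\Bigr) + 2\binom{K}{2}\exp\!\Bigl(-\tfrac{m(\delta_o/K)^{2}}{4+2\delta_o/K}\Bigr).
\end{equation}
Writing both bounds in the common form $C_1(K,\delta_K)\exp(-c_3(K,\delta_K)\,m)$, I would then take a second union bound, this time over all $\binom{n}{K}$ possible choices of $T$. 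Using $\binom{n}{K} \le (en/K)^K \le n^K$, the overall failure probability is bounded by $C_1(K,\delta_K)\,n^{K}\exp(-c_3(K,\delta_K)\,m)$. Finally, I would absorb the polynomial factor $n^{K}$ into the exponent: if $m > c_2 \ln n$ with $c_2$ chosen so that $c_3 c_2 > K + c_1$ for a sufficiently small $c_1 > 0$ (plus a harmless additive constant to swallow $\ln C_1$), then the failure probability is at most $\exp(-c_1 m)$, which is exactly the claimed bound. On the complementary event, every $K$-column submatrix satisfies the hypotheses of Lemma~\ref{lem2}, so $\mathbf{\Phi}$ satisfies RIP$(K,\delta_K)$.

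The main obstacle, and the step where one has to be careful, is the union bound over $\binom{n}{K}$ submatrices: this costs a factor of $n^K$ and is the sole source of the $\ln n$ dependence in the required number of rows. One must split $\delta_K = \delta_d + \delta_o$ judiciously and pick the constant $c_2$ large enough that $c_3(K,\delta_K)\,c_2 \ln n$ dominates the $K \ln n$ entropy term, while still leaving a linear-in-$m$ slack to produce the exponential decay $\exp(-c_1 m)$. A minor technical point, which I would address in passing, is that the inner product $\langle \phi_i,\phi_j\rangle$ involves only the indices at which both columns are nonzero; since Lemma~\ref{lem4} already bounds an analogous sum with up to $n$ terms of variance $1/m$, its conclusion is only strengthened (or at worst preserved) when the effective length of the sum shrinks due to the sparsity pattern, so the estimate applies directly. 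The extension to the $\mathbf{\Psi}$-RIP form stated after the standard RIP definition then follows immediately from the rotational invariance of the i.i.d.\ Gaussian distribution in the appropriate sense, or by absorbing $\mathbf{\Psi}$ into the choice of signal.
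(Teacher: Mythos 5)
Your proposal is correct and follows essentially the same route the paper intends: reduce RIP to the diagonal/off-diagonal conditions of Lemma~\ref{lem2}, verify them entrywise via the concentration bounds of Lemmas~\ref{lem3} and~\ref{lem4}, union-bound over the $\binom{n}{K}\le n^{K}$ column subsets, and absorb the $K\ln n$ entropy term using $m>c_2\ln n$. The only point worth tightening is the bookkeeping of the sparsity fraction and variance normalization (Lemma~\ref{lem3} assumes variance $1/(kn)$ over $kn$ nonzeros, so the per-column bound carries a factor of the sparsity ratio in the exponent), which affects only the constants $c_1,c_2$.
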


\begin{proof}
Please see Appendix C in the online appendix~\cite{Online_appendix}.  
\end{proof}

Theorem~\ref{RIPSG} has proven the RIP of sparse Gaussian matrix ${\bf{\Phi}}_{sG}$. In our application scenario of compressive data collection in IoT, however, the equivalent projection matrix is ${{\bf{\Phi}}_r}{\bf{\Phi}}_{sG}$, where ${\bf{\Phi}}_r$ is a sparse binary measurement matrix. In other words, ${{\bf{\Phi}}_r}{\bf{\Phi}}_{sG}$ is a submatrix of ${\bf{\Phi}}_{sG}$ formed by randomly selecting a subset of rows from ${\bf{\Phi}}_{sG}$. Therefore, we next present the RIP of ${{\bf{\Phi}}_r}{\bf{\Phi}}_{sG}$. 

\begin{corollary}\label{colla}[RIP for submatrices (by selecting a subset of rows) of sparse Gaussian matrix]
Assume that ${\bf{\Phi}} \in {\mathbb{R}}^{m \times n}(n > 2)$ is a sparse matrix, whose nonzero elements are independently and identically distributed according to a Gaussian distribution with zero mean and variance $1/n$, then with regard to submatrices of $\bf{\Phi}$ by selecting $m'$ rows from $\bf{\Phi}$, there exist constants ${c_3}$ and ${c_4}$ which are only dependent on ${\delta _K}$, $K$ and the proportion $\mu  = \frac{{m'}}{m}$, such that as long as $m > {c_4}\ln n$, submatrices of $\bf{\Phi}$ satisfy RIP($K,{\delta _K}$) with probability exceeding $1 - \exp ( - {c_3}m)$. 
\end{corollary}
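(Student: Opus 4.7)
The plan is to reduce Corollary~\ref{colla} directly to Theorem~\ref{RIPSG} by exploiting the fact that row selection preserves the defining structure of a sparse Gaussian matrix. First I would fix any subset $S$ of $m' = \mu m$ row indices and let ${\bf{\Phi}}' = {{\bf{\Phi}}_r}{{\bf{\Phi}}_{sG}}$ be the corresponding submatrix. Since ${{\bf{\Phi}}_r}$ simply copies rows, the surviving entries of ${\bf{\Phi}}'$ remain independent, retain their zero-mean Gaussian marginals, and retain their random support pattern; only the expected number of nonzero entries per column has been thinned by the factor $\mu$. So ${\bf{\Phi}}'$ is itself a sparse Gaussian matrix, now of size $m' \times n$, and all of the machinery developed for Theorem~\ref{RIPSG} can be replayed on it.

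Next I would re-run the Gersgorin-based argument that underlies the proof of Theorem~\ref{RIPSG}. Lemma~\ref{lem2} reduces RIP$(K,\delta_K)$ to simultaneously controlling the diagonal and off-diagonal entries of the Gram matrix ${{\bf{\Phi}}'}^T{\bf{\Phi}}'$. Applying Lemma~\ref{lem3} with $k$ replaced by the reduced sparsity rate $k\mu$ bounds $|G_{ii}-1|$ for each column; applying Lemma~\ref{lem4} the same way bounds $|G_{ij}|$ for each pair. Both tail probabilities still decay exponentially, but with rate constants shrunk by a factor proportional to $\mu$.

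The final step is the usual union bound: first over the $n$ diagonal entries and $\binom{n}{2}$ off-diagonal entries (equivalently, over all $\binom{n}{K} \le (en/K)^K$ choices of $K$ columns), and then noting that the resulting logarithmic factor can be absorbed whenever $m > c_4 \ln n$ for a constant $c_4$ depending only on $\delta_K$, $K$, and $\mu$. This yields the failure probability $\exp(-c_3 m)$, where $c_3$ likewise depends only on these parameters.

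The main obstacle I expect is careful bookkeeping of how the thinning by $\mu$ propagates through the constants. In particular, Lemma~\ref{lem3} and Lemma~\ref{lem4} supply exponents of the form $\mu m \delta^2 / 16$ and $\mu m t^2/(4+2t)$, so $c_3$ scales roughly like $\mu$ while $c_4$ scales like $1/\mu$, and one must verify that the variance choice $1/n$ (rather than $1/m$ as in Theorem~\ref{RIPSG}) is compatible with obtaining diagonal Gram entries near one after row thinning; if necessary, a rescaling of ${\bf{\Phi}}'$ by $\sqrt{n/m'}$ restores the required normalization without affecting RIP constants. A mild secondary issue is that the column sparsity of ${\bf{\Phi}}'$ is itself a random variable when the support of ${\bf{\Phi}}_{sG}$ is random; a standard Chernoff bound shows the number of nonzeros per column in ${\bf{\Phi}}'$ concentrates around $k\mu m$, and this event can be absorbed into the overall failure probability without worsening the exponent.
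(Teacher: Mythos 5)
Your proposal is correct and follows what is evidently the paper's intended route: observe that row selection preserves the sparse-Gaussian structure (with the per-column nonzero count thinned by $\mu$), replay the Gersgorin/Gram-matrix argument of Lemmas~\ref{lem2}--\ref{lem4} and Theorem~\ref{RIPSG} on the $m'\times n$ submatrix, and track how $\mu$ enters $c_3$ and $c_4$ via the union bound. Your side remarks---that the stated variance $1/n$ requires a rescaling to keep the diagonal Gram entries near one, and that the random per-column support needs a Chernoff concentration step---are exactly the right bookkeeping points and do not change the argument.
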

\begin{proof}
Please see Appendix D in the online appendix~\cite{Online_appendix}. 
\end{proof}

\section{Performance Evaluation}
\label{sec6}
In this section, we evaluate the performance of employing the designed sparse Gaussian matrix to perform CS-based source coding for compressive data collection in IoT and make a comparison with traditional projection matrices. We utilize a real-world dataset from SensorScope~\cite{ingelrest2010sensorscope} and tailor a temperature signal of size $N = 1024$ as the original signal to be transmitted in the experiments. We would like to point out that we focus mainly on single-hop transmission-based data collection in IoT in this paper. Thus, the process of data collection (as detailed in Section~\ref{sec5}) from different sensor nodes is identical. Therefore, in the subsequent experiments, we only consider the case where one sensor node transmits data to the FC over a lossy channel, and the results reported in this paper can be easily extended to the case where there are multiple sensor nodes transmitting data to the FC.

\begin{figure}
\centering
\begin{minipage}[t]{0.24\textwidth}
  \centering
\includegraphics[width=\textwidth]{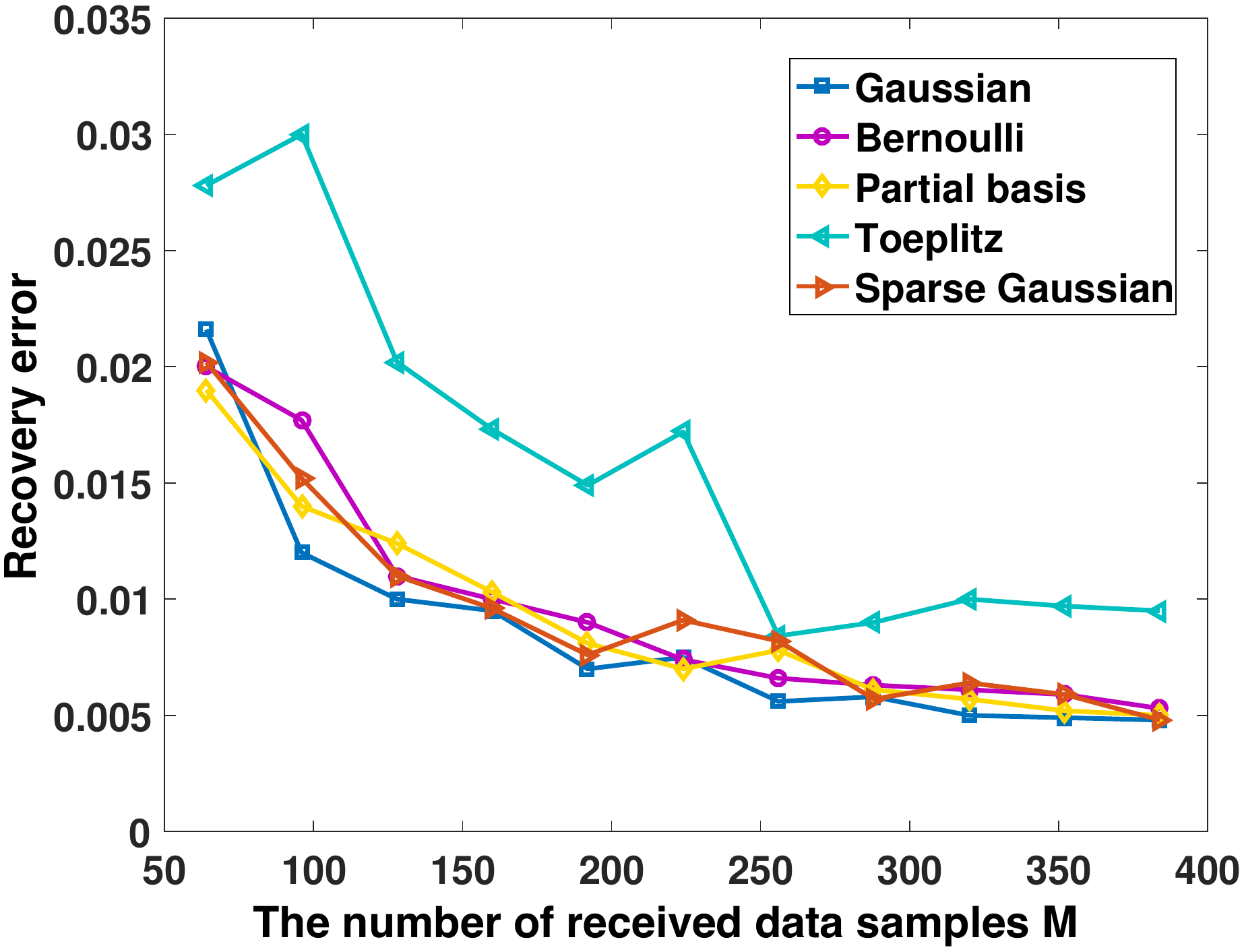}
\caption{Signal recovery performance comparison of using different projection matrices.}
\label{fig:recoverM}
\end{minipage}
\begin{minipage}[t]{0.235\textwidth}
  \centering
  \includegraphics[width=\textwidth]{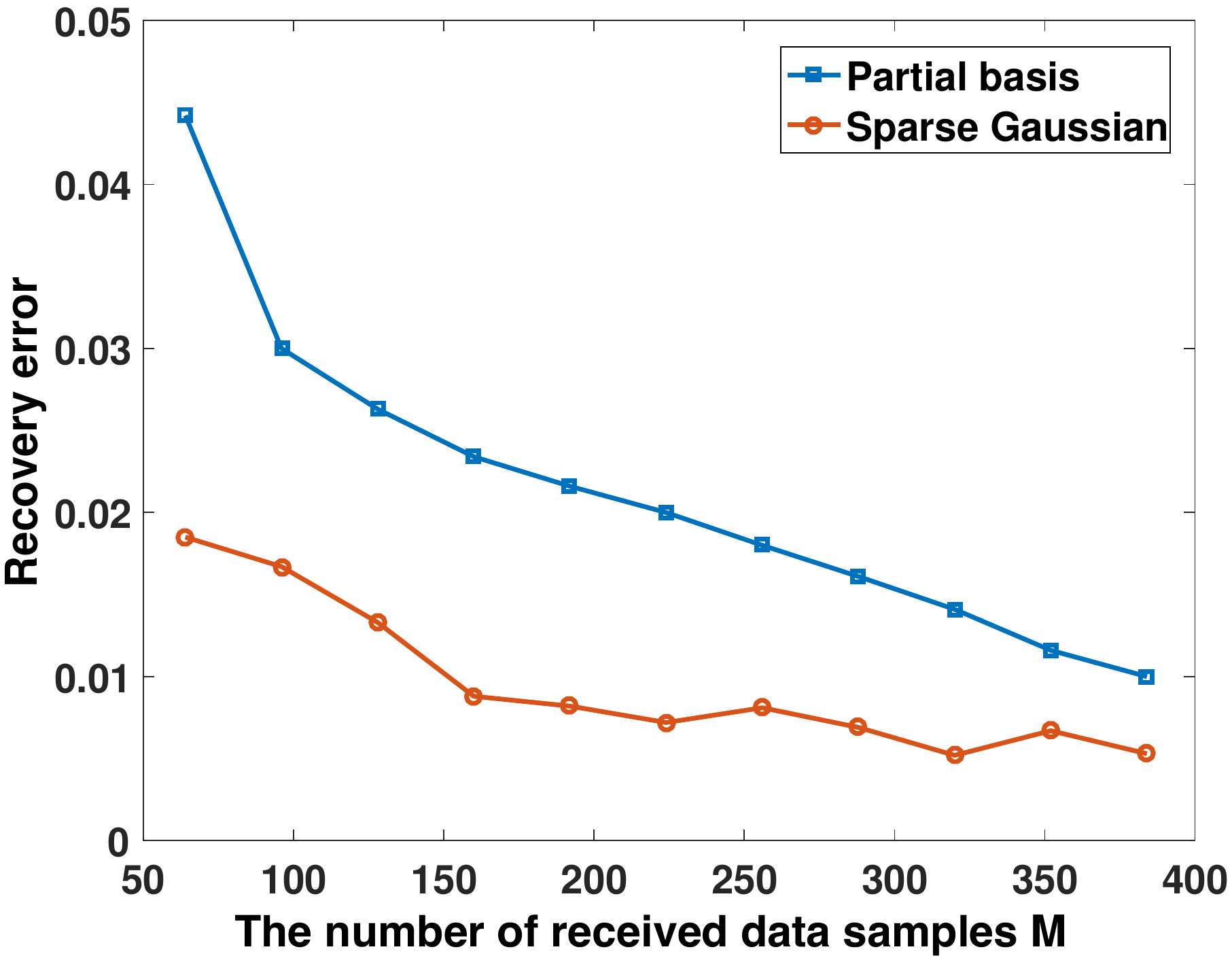}
\caption{Signal recovery performance comparison with packet length set to $16$.}
\label{fig:recoverML}
\end{minipage}
\end{figure}

\subsection{Signal Recovery Performance}
We first compare sparse Gaussian matrix and traditional projection matrices in terms of signal recovery performance. The result of signal recovery error comparison among using different projection matrices at the sensor node before transmission is shown in Fig.~\ref{fig:recoverM}. Note that in this experiment, we assume that the projection operation is dimensional-invariant, and the number of received data samples $M$ varies from $64$ to $384$ with a step length of $32$. Herein, the used sparse Gaussian matrix is of size $1024 \times 1024$, and it is constructed as follows. First, we generate a null matrix of the same size. Then, we randomly choose $10\%$ (rounded to the nearest integer) locations in the null matrix to place nonzero elements. Finally, the values of the nonzero elements are determined according to a Gaussian-distributed (zero mean and unit variance) random number generator, while the rest locations in the null matrix are set to $0$. Note that in the subsequent experiments, the used sparse Gaussian matrices are all constructed in the same way unless otherwise specified. We can observe from the figure that utilizing dense random matrices (i.e., Gaussian and Bernoulli matrices), partial basis matrices, and the proposed sparse Gaussian matrix achieves similar signal recovery performance. However, when using Toeplitz matrices, a more significant recovery error is obtained under the same number of received data samples. This is because the structure of Toeplitz matrices is broken during lossy transmission from the sensor node to the FC.

Considering that Toeplitz matrices achieve poor signal recovery performance and that dense random matrices induce high resource cost for sensor nodes (which will be validated later), we next make a further comparison between sparse Gaussian matrix and partial basis matrix with packet length taken into consideration. Assume the packet length is set to $16$, that is, $16$ consecutive data samples are first organized into a data packet, and then all the generated data packets are transmitted to the FC. The result of signal recovery error comparison is shown in Fig.~\ref{fig:recoverML}. As noted in the figure, when packet length is set to $16$, using partial basis matrices as the projection matrix results in more significant recovery error than using sparse Gaussian matrix. This is because when adopting a large packet length, if a data packet is lost during transmission, all the data samples in it will be missing, leading to block data loss. However, partial basis matrices are only suitable to the case of random data loss (i.e., the packet length is $1$, that is, each transmitted data sample is organized into an independent data packet).

\begin{figure}
\centering
\begin{center}
\subfigure[Partial basis matrix]{
\begin{minipage}[t]{0.45\columnwidth}
\includegraphics[width=1\textwidth]{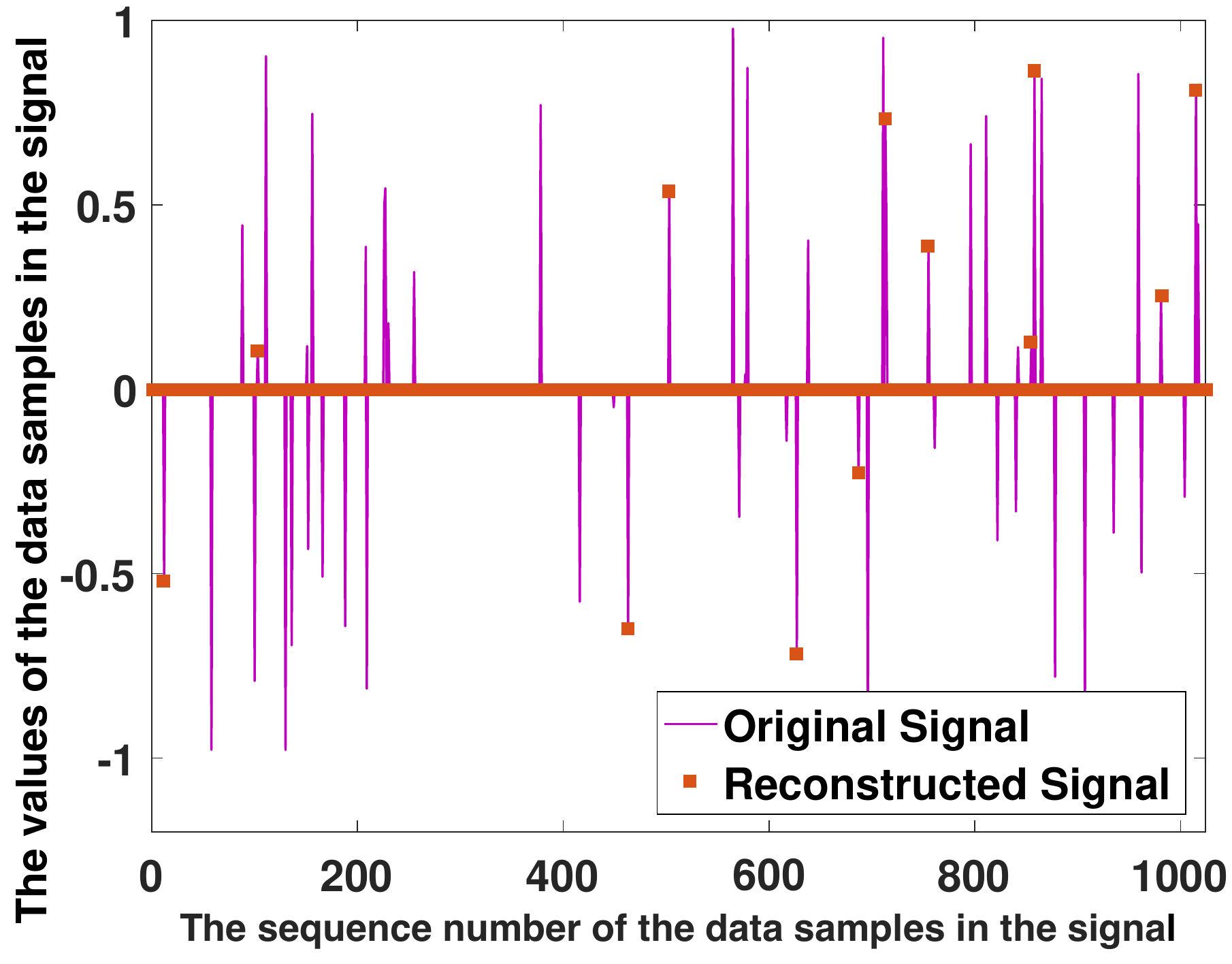}
\label{fig:sparseTrcs}
\end{minipage}
}
\subfigure[Sparse Gaussian matrix]{
\begin{minipage}[t]{0.45\columnwidth}
\includegraphics[width=1\textwidth]{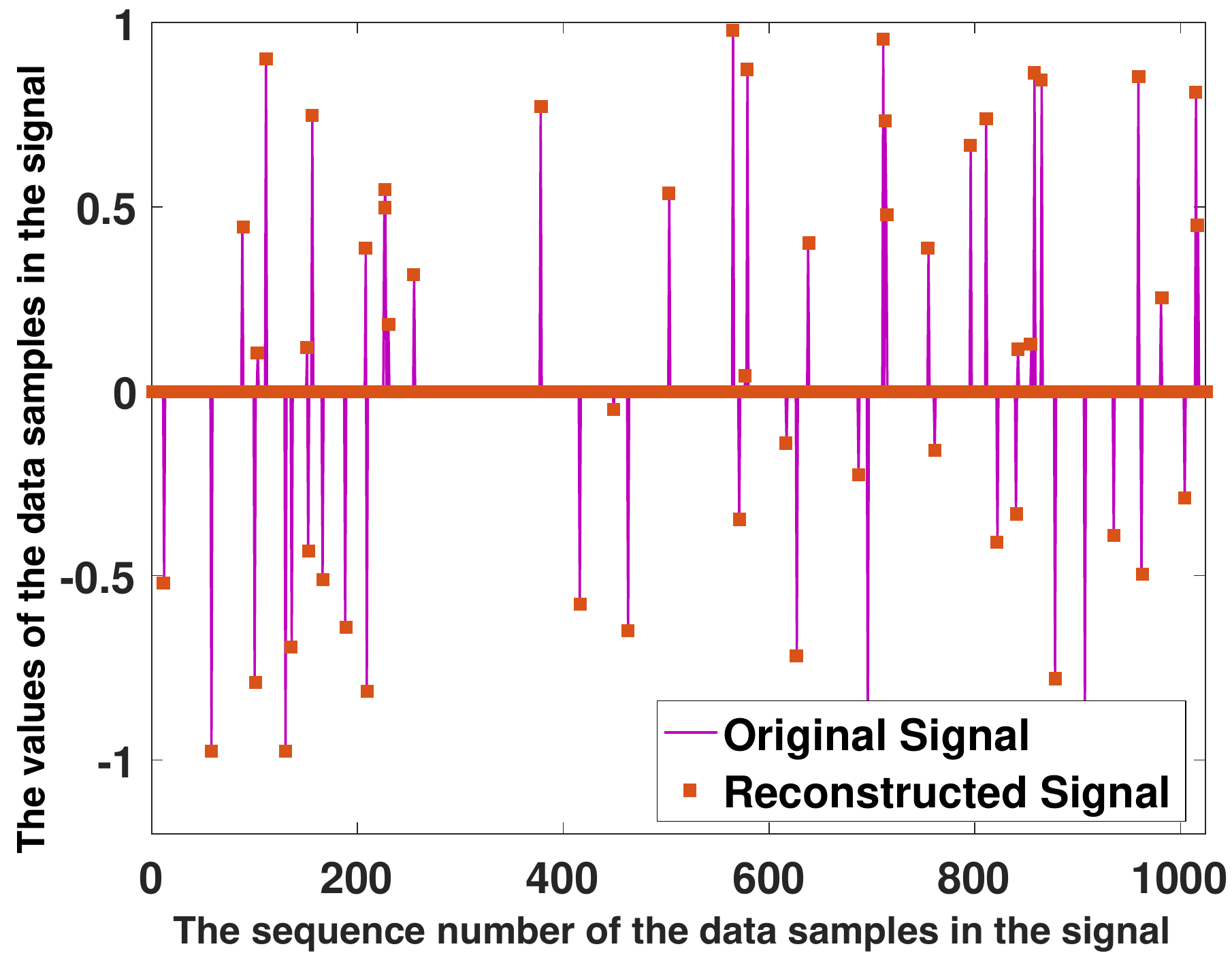}
\label{fig:sparseTsg}
\end{minipage}
}

\caption{Signal recovery comparison using a sparse signal in the canonical basis.}
\vspace{-15pt}
\label{fig:sparseT}
\end{center}
\end{figure}

Another limitation of partial basis matrices is that they can only be applied to the scenario when the target signal is sparse under some orthogonal transform basis. However, when the target signal is itself sparse. That is, the sparsifying basis is the canonical basis, partial basis matrices fail to achieve successful signal recovery. To demonstrate that partial basis matrices cannot promise accurate signal recovery when the target signal is sparse under the canonical basis. We synthesize a sparse signal $\bf{x}$ under the canonical basis with a dimension of $N = 1024$. Only $50$ elements of $\bf{x}$ are nonzero (varying from $-1$ to $1$) while most elements are zero. We respectively employ partial basis matrices and the designed sparse Gaussian matrix to perform CS-based source coding at the IoT sensor node, and the number of received data samples is set to $64$. The signal recovery results using the two kinds of projection matrix are shown in Fig.~\ref{fig:sparseT}. We can observe from the figure that accurate signal recovery is achieved when using the designed sparse Gaussian matrix to perform CS-based source coding, while it fails to achieve accurate signal recovery when using partial basis matrices, which further validates the superior performance of the designed sparse Gaussian matrix and the limitations of partial basis matrices.

\subsection{Resource Cost}


Considering that many low-cost IoT sensor nodes are resource-constrained (e.g., computational capability, storage capacity), it is inefficient or even impractical to implement complicated operations on them. Now, we make a comparison between sparse Gaussian matrix and dense random matrices (e.g., Gaussian and Bernoulli matrices) in terms of time and memory cost for projection matrix construction. We implement the three kinds of projection matrices on STM32W108 chips that offer IEEE $802.15.4$ communication standard, with a clock speed of $24$ Mhz, a RAM of $12$ KB, a ROM of $128$ KB, and $4$ AA batteries for power supply.  


\addtolength{\topmargin}{0.011in}


\begin{figure}
\centering
\begin{minipage}[t]{0.24\textwidth}
  \centering
\includegraphics[width=\textwidth]{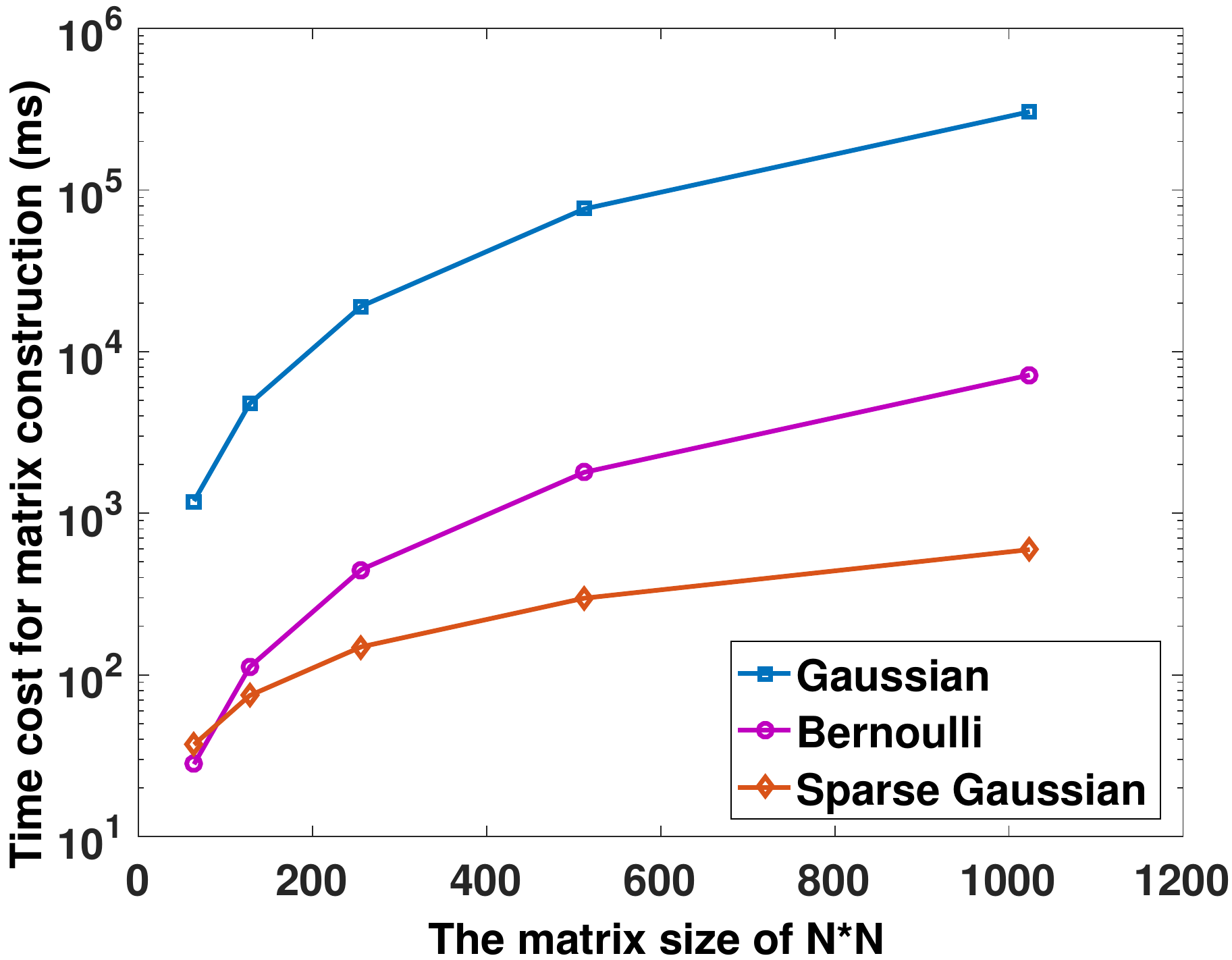}
\caption{Time cost comparison among Gaussian, Bernoulli, and sparse Gaussian matrix for matrix construction.}
\label{fig:sgt}
\end{minipage}
\begin{minipage}[t]{0.24\textwidth}
  \centering
  \includegraphics[width=\textwidth]{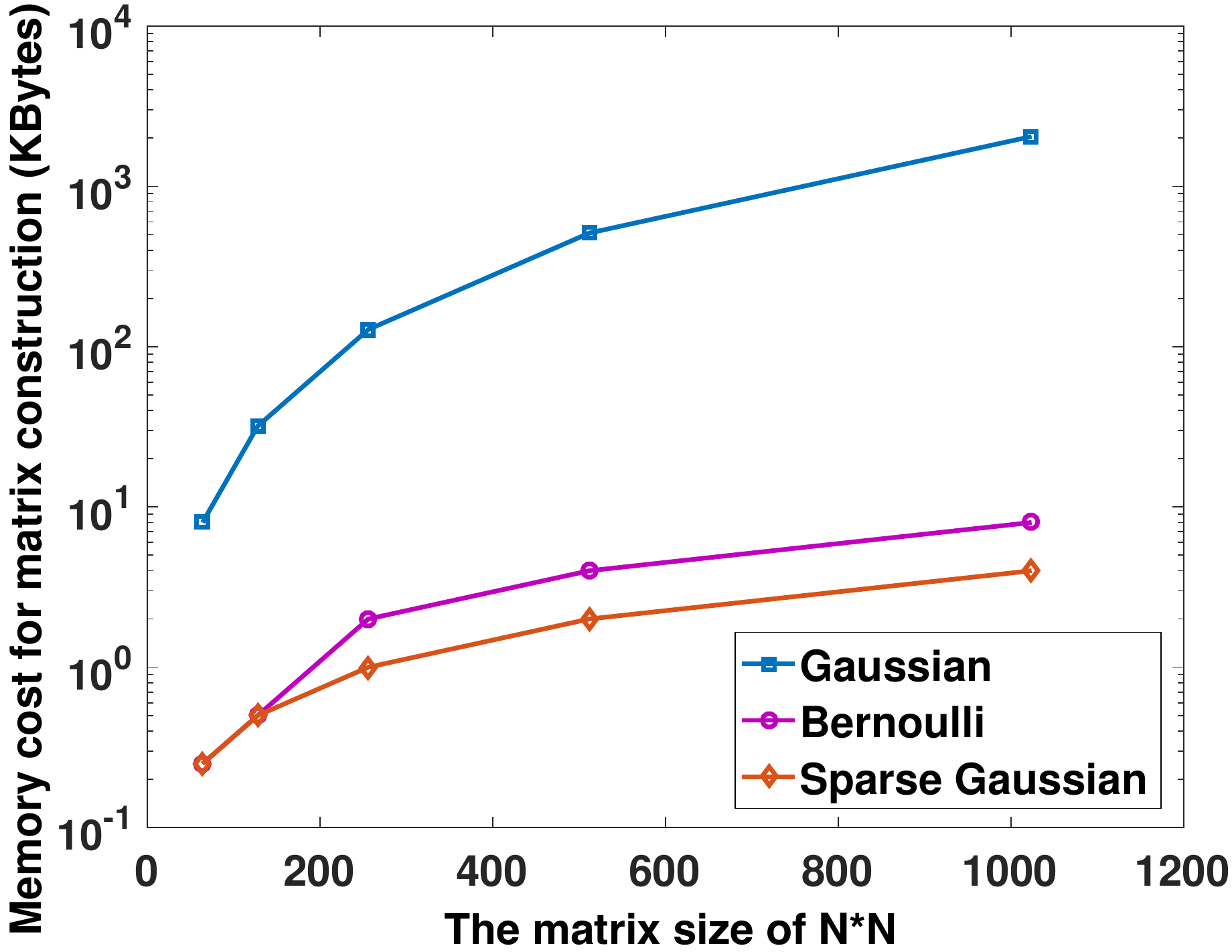}
\caption{Memory cost comparison among Gaussian, Bernoulli, and sparse Gaussian matrix for matrix construction.}
\label{fig:sgm}
\end{minipage}
\end{figure}
The comparison results of time cost and memory cost for matrix construction are respectively shown in Fig.~\ref{fig:sgt} and Fig.~\ref{fig:sgm} (the matrix is of size $N \times N$). Note that a random Gaussian number is quantified by $16$ bits, and one element in Bernoulli matrix is quantified by $1$ bit. Note that when there are only a few nonzero elements in Bernoulli matrix, we only store the positions of these nonzero elements, which is more efficient. Also, the number of nonzero elements in sparse Gaussian matrix will directly impact the resource cost of sensor nodes. We can see from Fig.~\ref{fig:sgt} that it costs the most time to construct a random Gaussian matrix, which is self-explanatory. When $N = 64$, sparse Gaussian matrix costs a little more time than Bernoulli matrix and remains the most efficient one to construct when $N$ increases. As for the memory cost shown in Fig.~\ref{fig:sgm}, sparse Gaussian matrix consumes almost the same memory as Bernoulli matrix when $N = 64$, while remains the most memory-saving for rest cases. Specifically, when $N =1024$, constructing a Bernoulli matrix with only two nonzero elements in each row needs $8$ KBytes of memory, while only $4$ KBytes are enough for sparse Gaussian matrix. Besides, it costs $1792$ and $763$ ms to construct such a Bernoulli matrix and sparse Gaussian matrix, respectively.

\section{Conclusion}

\label{sec7}
In this paper, to achieve efficient and resource-saving CS-based source coding for compressive data collection in IoT, we design a novel, simple projection matrix, named sparse Gaussian matrix, which is easy and efficient to implement practical IoT systems. Specifically, we first present the construction method of sparse Gaussian matrix and then prove its RIP theoretically. Finally, we conduct extensive experiments to demonstrate that employing the designed sparse Gaussian matrix to perform CS-based source coding at the sensor node can significantly save time and memory cost while guaranteeing satisfactory signal recovery performance.

\ifCLASSOPTIONcaptionsoff
  \newpage
\fi

\bibliographystyle{IEEEtran}
    \bibliography{reference}

\end{document}